\documentclass[aps,pra,reprint,showpacs,floatfix,longbibliography]{revtex4-1}
\usepackage{graphicx, amsmath, amssymb, amsthm, braket, enumerate}
\usepackage[caption=false]{subfig}
\allowdisplaybreaks[1] 


\newtheorem{definition}{Definition}
\newtheorem{lemma}{Lemma}
\newtheorem{theorem}{Theorem}

\newcommand{\ketbra}[2]{{\left| #1 \middle\rangle \middle\langle #2 \right|}}
\newcommand{\Mod}[1]{\ (\text{mod}\ #1)}


\begin{document}

\title{Stationary States in Quantum Walk Search}

\author{Kri\v{s}j\={a}nis Pr\={u}sis}
	\email{krisjanis.prusis@lu.lv}
\author{Jevg\={e}nijs Vihrovs}
	\email{jevgenijs.vihrovs@lu.lv}
\author{Thomas G.~Wong}
	\email{Currently at University of Texas at Austin, twong@cs.utexas.edu}
	\affiliation{Faculty of Computing, University of Latvia, Rai\c{n}a bulv.~19, R\=\i ga, LV-1586, Latvia}

\begin{abstract}
	When classically searching a database, having additional correct answers makes the search easier. For a discrete-time quantum walk searching a graph for a marked vertex, however, additional marked vertices can make the search harder by causing the system to approximately begin in a stationary state, so the system fails to evolve. In this paper, we completely characterize the stationary states, or 1-eigenvectors, of the quantum walk search operator for general graphs and configurations of marked vertices by decomposing their amplitudes into uniform and flip states. This infinitely expands the number of known stationary states and gives an optimization procedure to find the stationary state closest to the initial uniform state of the walk. We further prove theorems on the existence of stationary states, with them conditionally existing if the marked vertices form a bipartite connected component and always existing if non-bipartite. These results utilize the standard oracle in Grover's algorithm, but we show that a different type of oracle prevents stationary states from interfering with the search algorithm.
\end{abstract}

\pacs{03.67.-a, 05.40.Fb, 02.10.Ox}

\maketitle


\section{Introduction}

Quantum computers are well-known for their ability to outperform classical computers in many algorithmic applications \cite{Montanaro2016}. One famous example is unstructured search, where an item in a database is marked by an oracle that responds yes or no as to whether an item is marked. A classical computer finds the marked item in $O(N)$ queries, while a quantum computer takes $O(\sqrt{N})$ queries using Grover's algorithm \cite{Grover1996}. If there are $k$ marked items, then the classical and quantum computers respectively search in $O(N/k)$ and $O(\sqrt{N/k})$ time. So additional marked items make the search problem easier for both types of computers, as expected.

If there is structure to the database, however, then there are scenarios where additional marked items make search easier for a classical computer but harder for a quantum computer. More precisely, say the database is formulated as a graph of $N$ vertices where one or more vertices are marked. The edges of the graph define the structure by which one moves from vertex to vertex. To find a marked vertex, one approach is to classically and randomly walk on the graph, querying the oracle with each step until a marked vertex is found. Then the more marked vertices, the easier the search problem becomes since there will be more marked vertices for the random walk to stumble upon.

The opposite can be true in the quantum regime, where additional marked vertices make the problem harder, not easier. This occurs using a discrete-time quantum walk \cite{Kempe2003}, where the vertices of the graph define an $N$-dimensional Hilbert space. Furthermore, to effect a non-trivial evolution \cite{Meyer1996a,Meyer1996b}, the walking particle also has an internal degree of freedom (often called the coin) encoding the various directions that the particle can hop. Then if the number of edges of the graph is $|E|$, the system evolves in the Hilbert space $\mathbb{C}^{2|E|}$ spanned by orthonormal basis states $\{ \ket{v,c} \}$, where $v = 1, 2, \dots, N$ specifies the vertex and $c = 1, 2, \dots, d_v$ denotes the  directions in which a particle at vertex $v$ can point (so $d_v$ denotes the degree of vertex $v$). The system $\ket{\psi}$ begins in a uniform state
\begin{equation}
	\label{eq:initial}
	\ket{\psi(0)} = \frac{1}{\sqrt{2|E|}} \sum_{v=1}^N \sum_{c=1}^{d_v} \ket{v,c}.
\end{equation}
Then the search is performed by repeated applications of
\begin{equation}
	\label{eq:U}
	U = SCQ.
\end{equation}
Here, $S$ is the flip-flop shift \cite{AKR2005} that causes the particle to hop and then turn around (so a particle at vertex $1$ pointing towards vertex $2$ will end up at vertex $2$ pointing towards vertex $1$, or $S\ket{1,2} = \ket{2,1}$). $C$ is the coin flip that applies the Grover diffusion coin $2 \ketbra{s_c}{s_c} - I$ \cite{SKW2003} on the directional space at each vertex, where $\ket{s_c} = \sum_{i=1}^{d_v} \ket{i}/\sqrt{d_v}$ is the equal superposition over the directions. As shown in Lemma~2 of \cite{APVW2016}, the Grover coin inverts each amplitude of the directional state about the average amplitude, so it is the ``inversion about the mean'' of Grover's algorithm \cite{Grover1996}. Finally, $Q$ is an oracle query that flips the sign of the amplitude at marked vertices, which is the standard oracle in Grover's algorithm.

As shown by Ambainis, Kempe, and Rivosh \cite{AKR2005}, applying this quantum algorithm to search for a unique marked vertex on the two-dimensional (2D) periodic square lattice yields a success probability of $O(1/\log N)$ after $O(\sqrt{N \log N})$ steps. With amplitude amplification \cite{BHMT2000}, this results in an overall runtime of $O(\sqrt{N} \log N)$. Now say there are two marked vertices that are adjacent to each other. Classically, this makes the search problem easier. But Nahimovs and Rivosh \cite{NR2016} recently showed that the quantum walk search algorithm now takes time $O(N)$, completely losing its quantum speedup. This is because the initial uniform state \eqref{eq:initial} is approximately equal to a 1-eigenvector of the search operator $U$ \eqref{eq:U}, and so the system fails to evolve, and the quantum algorithm is equivalent to classically guessing and checking. So in this example, having an additional marked vertex makes the search problem harder, not easier, for the quantum algorithm.

\begin{figure}
\begin{center}
	\includegraphics{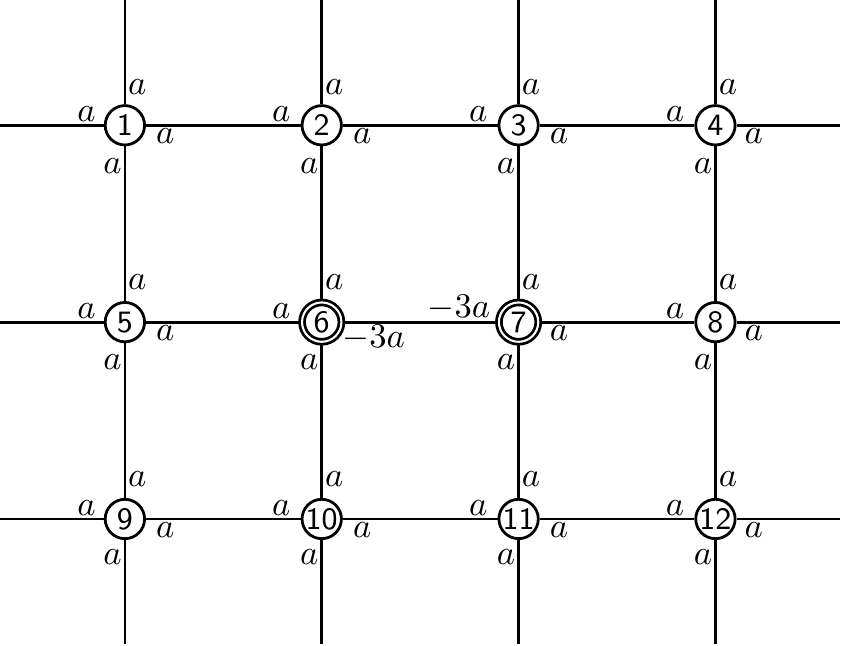}
	\caption{\label{fig:square_pair_opt} The optimal stationary state for the $4 \times 3$ periodic square grid with an adjacent pair of marked vertices (vertices $6$ and $7$, indicated by double circles).}
\end{center}
\end{figure}

More precisely, the stationary state that approximates the initial uniform state \eqref{eq:initial} is depicted in Fig.~\ref{fig:square_pair_opt}. This figure reveals three properties, identified by Nahimovs and Rivosh \cite{NR2016} and further explored in follow-up work by Nahimovs and Santos \cite{NS2016}, of stationary states. First, the directional amplitudes of unmarked vertices (depicted by single circles) are equal. For example, vertex $1$ has amplitude $a$ (chosen to normalize the overall state) in each of its four directions. Second, the directional amplitudes of marked vertices (depicted by double circles) sum to $0$. For example, the sum of vertex $6$'s amplitudes is $a - 3a + a + a = 0$. Third, the directional amplitudes of adjacent vertices pointing to each other are equal. For example, vertices $6$ and $7$ point to each other with amplitude $-3a$. Nahimovs, Rivosh, and Santos \cite{NR2016,NS2016} showed that a state with these three properties is a stationary state (i.e., 1-eigenvector) of the quantum walk search operator $U$ \eqref{eq:U}. Going through each operator in $U$, the query $Q$ flips the sign of the marked vertices, the coin $C$ again flips the sign of the marked vertices since their average amplitudes are zero, and the shift $S$ swaps pairs of amplitudes pointing to each other, which are equal. Thus $U$ leaves such states invariant.

\begin{figure}
\begin{center}
	\includegraphics{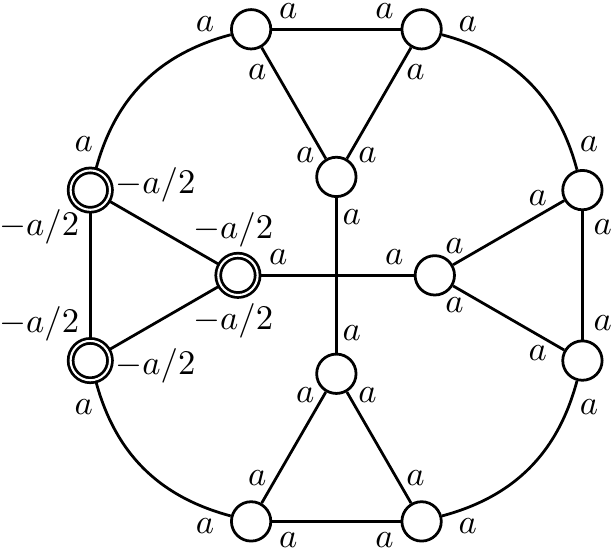}
	\caption{\label{fig:simplex_full} The optimal stationary state for the simplex of $4$ complete graphs, each with $3$ vertices, with a completely marked clique (indicated by double circles).}
\end{center}
\end{figure}

Another example, which to the best of our knowledge is historically the first example of the quantum walk search algorithm beginning in an approximate stationary state, is the simplex of complete graphs with a fully marked clique \cite{WA2015}. An example is depicted in Fig.~\ref{fig:simplex_full}, and the labels express the stationary state satisfying the three conditions of Nahimovs, Rivosh, and Santos \cite{NR2016,NS2016}. Since the initial uniform state \eqref{eq:initial} is approximately equal to this stationary state, the quantum algorithm is no better than classically guessing and checking.

In this paper, however, we show that the stationary states described by Nahimovs, Rivosh, and Santos \cite{NR2016,NS2016} are not the only stationary states. We give the exact necessary and sufficient conditions on the amplitudes between adjacent vertices for a state to be stationary. In doing so, we greatly expand the number of known stationary configurations to an infinite class. We do this by incorporating concepts from \cite{APVW2016} regarding uniform and flip states, which form an orthogonal basis for directional states. Then we show that the type of stationary state described by Nahimovs, Rivosh, and Santos \cite{NR2016,NS2016} is optimal, in the sense that it is the stationary states closest to the initial uniform state \eqref{eq:initial}. Then we prove two theorems about the existence of stationary states on general graphs and with a connected component of marked vertices. In particular, if the marked vertices form a bipartite component, then a stationary state exists if and only if the sum of the amplitudes to be assigned to each partite set are equal. If they form a non-bipartite component, on the other hand, then a stationary state always exists.

These results use the search operator $U$ \eqref{eq:U}, which utilizes the natural oracle from Grover's algorithm. If we instead use the oracle from \cite{SKW2003}, which applies different coin operators on unmarked and marked vertices, then stationary states are always orthogonal to the initial uniform state \eqref{eq:initial}. Hence the most natural search oracle may not be ideal for certain quantum walk search problems.


\section{Stationary States}

In this section, we begin by introducing the concepts of uniform and flip states from \cite{APVW2016}, which form an orthogonal basis for directional states. Then we give conditions for a state to be stationary under the quantum walk search operator $U$ \eqref{eq:U}. This exactly characterizes the stationary states of the walk, and it greatly expands stationary states beyond those of Nahimovs, Rivosh, and Santos \cite{NR2016,NS2016}. Nonetheless, we prove that their stationary states are optimal, meaning they are the stationary states closest to the initial uniform state \eqref{eq:initial}.


\subsection{Uniform and Flip States}

Let $\ket{v_c}$ be the (likely unnormalized) directional state at vertex $v$. For example, in Fig.~\ref{fig:square_pair_opt}, the directional state at vertex $6$ is $\ket{6_c} = a \ket{\uparrow} - 3a \ket{\rightarrow} + a \ket{\downarrow} + a \ket{\leftarrow}$. In general, we can express a directional state as $\ket{v_c} = \sum_{i=1}^{d_v} \alpha_i \ket{i}$. From this, we define uniform and flip states:
\begin{definition}
	We call $\ket{v_c}$ a \emph{uniform state} if $\alpha_1 = \alpha_2 = \ldots = \alpha_{d_v}$.
\end{definition}
\begin{definition}
	We call $\ket{v_c}$ a \emph{flip state} if $\sum_{i=1}^{d_v} \alpha_i = 0$.
\end{definition}
Note that \cite{APVW2016} defined uniform and flip states with regard to all vertices, whereas we only define them here with regards to individual vertices.

Uniform and flip states are useful because they are a complete orthogonal basis for directional states \cite{APVW2016}. To prove this, we first show that any uniform state $\ket{\sigma}$ is orthogonal to any flip state $\ket{\phi}$:
\[ \braket{\phi | \sigma} = \sum_{i=1}^{d_v} \braket{\phi | i} \braket{i | \sigma} = \overline{\sigma} \sum_{i=1}^{d_v} \braket{\phi | i} = 0, \]
where $\overline{\sigma} = \frac{1}{d_v} \sum_{i=1}^{d_v} \braket{i|\sigma}$ denotes the average of the amplitudes of $\ket{\sigma}$.

Now let us show that uniform and flip states are a complete basis. Consider an arbitrary directional state $\ket{v_c}$, and let $\overline{v_c} = \frac{1}{d_v} \sum_{i=1}^{d_v} \braket{i|v_c}$ be the average of its amplitudes. Define the uniform state $\ket{v_{\sigma}}$ such that $\braket{i|v_{\sigma}} = \overline{v_c}$ for all $i$. Now consider the state $\ket{v_{\phi}} = \ket{v_c} - \ket{v_{\sigma}}$. It is a flip state, since
\begin{align*}
	\sum_{i=1}^{d_v} \braket{i|v_{\phi}} 
		&= \sum_{i=1}^{d_v} \braket{i|v} - \sum_{i=1}^{d_v} \braket{i|v_{\sigma}} \\
		&= \sum_{i=1}^{d_v} \braket{i|v} - d_v \cdot \overline{v_c} \\
		&= 0.
\end{align*}
Thus $\ket{v_c}$ can be expressed as a linear combination of uniform and flip states. For reference, let us write this as a Lemma:

\begin{lemma}
	\label{lemma:uniformflip}
	Any directional state $\ket{v_c}$ can be expressed as the sum of a uniform state $\ket{v_\sigma}$ and a flip state $\ket{v_\phi}$.
\end{lemma}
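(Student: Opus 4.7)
The plan is to give a constructive decomposition: define the uniform part by averaging the amplitudes of $\ket{v_c}$, and then let the flip part be the residual. This is essentially a projection argument in the two-dimensional span of "constant" vectors and their orthogonal complement inside $\mathbb{C}^{d_v}$, so there is no real obstacle---it amounts to checking that the residual really satisfies the flip condition.

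Concretely, first I would compute $\overline{v_c} = \frac{1}{d_v}\sum_{i=1}^{d_v} \braket{i|v_c}$ and define $\ket{v_\sigma}$ as the unique state with $\braket{i|v_\sigma} = \overline{v_c}$ for every $i \in \{1,\ldots,d_v\}$. By definition $\ket{v_\sigma}$ is uniform. Then I would set $\ket{v_\phi} := \ket{v_c} - \ket{v_\sigma}$, so that the identity $\ket{v_c} = \ket{v_\sigma} + \ket{v_\phi}$ holds trivially.

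The only thing left to verify is that $\ket{v_\phi}$ is a flip state, which is a short calculation:
\begin{align*}
    \sum_{i=1}^{d_v} \braket{i|v_\phi}
    &= \sum_{i=1}^{d_v} \braket{i|v_c} - \sum_{i=1}^{d_v} \braket{i|v_\sigma} \\
    &= d_v \cdot \overline{v_c} - d_v \cdot \overline{v_c} = 0.
\end{align*}
This is exactly the defining condition of a flip state, completing the decomposition. Since this argument is entirely constructive and the computation is elementary, I do not expect any serious obstacle; the content of the lemma lies in noticing that the averaging map is a projection whose kernel is precisely the flip subspace, which the earlier orthogonality computation $\braket{\phi|\sigma} = 0$ already makes manifest.
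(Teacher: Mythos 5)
Your proposal is correct and follows essentially the same route as the paper: define the uniform part via the average amplitude $\overline{v_c}$, set $\ket{v_\phi} = \ket{v_c} - \ket{v_\sigma}$, and verify that the residual amplitudes sum to zero. The only difference is cosmetic (you phrase it as a projection onto the constant subspace), so there is nothing to add.
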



\subsection{General Stationary States}

Using uniform and flip states, we now derive if and only if conditions for a state to be stationary under the quantum walk search operator $U$ \eqref{eq:U}.

\begin{theorem} 
	\label{thm:stat-states}
	Let $\ket{\psi}$ be the state of the quantum walk. For each pair of adjacent vertices $a$ and $b$, let the amplitude on $\ket{ab}$ in $\ket{\psi}$ be $\sigma_1 + \phi_1$, where $\sigma_1$ comes from the uniform part of $a$ and $\phi_1$ comes from the flip part of $a$. Similarly, let the amplitude on $\ket{ba}$ in $\ket{\psi}$ be $\sigma_2 + \phi_2$. Then $\ket{\psi}$ is stationary under the quantum walk search operator $U = SCQ$ if and only if:
	\begin{enumerate}[(a)]
		\item	if $a$ is unmarked and $b$ is marked, then $\sigma_1 = \phi_2$ and $\phi_1 = -\sigma_2$;

		\item	if $a$ and $b$ are both unmarked, then $\sigma_1 = \sigma_2$ and $\phi_1 = -\phi_2$;

		\item	if $a$ and $b$ are both marked, then $\sigma_1 = -\sigma_2$ and $\phi_1 = \phi_2$.
	\end{enumerate}
\end{theorem}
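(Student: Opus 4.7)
The plan is to diagonalize the action of $U = SCQ$ with respect to the uniform/flip decomposition of Lemma~\ref{lemma:uniformflip} and read off the condition one edge at a time. The structural observation that makes this easy is that in this basis the coin $C = 2\ket{s_c}\bra{s_c}-I$ acts as $+I$ on the one-dimensional uniform subspace at each vertex and as $-I$ on its orthogonal complement, the flip subspace (Lemma~2 of \cite{APVW2016}). In coordinates this means that if the amplitude on $\ket{ab}$ is written $\sigma_1 + \phi_1$, then after $C$ it is $\sigma_1 - \phi_1$. The oracle $Q$ multiplies the full directional state at $v$ by a sign $\epsilon_v \in \{+1,-1\}$ (with $\epsilon_v = -1$ iff $v$ is marked), so it rescales both the uniform coefficient $\sigma$ and the flip coefficient $\phi$ by the same $\epsilon_v$. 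Finally, $S$ swaps $\ket{ab} \leftrightarrow \ket{ba}$ and is the only factor that couples amplitudes at different vertices.

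Next I would exploit locality. Because $S$ couples only the two amplitudes $\ket{ab},\ket{ba}$ of a single edge and $C,Q$ act within a single vertex, the global equation $U\ket{\psi} = \ket{\psi}$ decouples into one independent scalar condition per directed edge. Tracing the pair $(\sigma_1+\phi_1,\ \sigma_2+\phi_2)$ through $Q$, then $C$, then $S$, and equating to the original values, gives the symmetric system
\begin{align*}
\sigma_1+\phi_1 &= \epsilon_b(\sigma_2-\phi_2),\\
\sigma_2+\phi_2 &= \epsilon_a(\sigma_1-\phi_1).
\end{align*}
These two equations contain all the content of the theorem: specializing $(\epsilon_a,\epsilon_b)$ to the three sign patterns in (a), (b), (c) and taking the sum and difference of the two lines yields $\sigma_1=\phi_2,\ \phi_1=-\sigma_2$ in case~(a), $\sigma_1=\sigma_2,\ \phi_1=-\phi_2$ in case~(b), and $\sigma_1=-\sigma_2,\ \phi_1=\phi_2$ in case~(c). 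Every manipulation is reversible, so this single computation covers both directions of the if-and-only-if.

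There is no genuinely hard step, but two bookkeeping points deserve care. First, $\sigma_v$ is a single number attached to the entire vertex $v$, whereas $\phi$ is indexed by direction, so when the per-edge condition is imposed at every edge the same $\sigma_v$ reappears in every equation incident to $v$; this is what couples the otherwise independent edge conditions into a global constraint. Second, after $Q$ acts on a marked vertex both the uniform and the flip parts of that vertex acquire a minus sign, so one must not forget to negate $\sigma$ (not only $\phi$) before invoking $C$. With those conventions fixed, the theorem reduces to the short linear-algebra computation above.
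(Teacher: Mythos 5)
Your proposal is correct and follows essentially the same route as the paper: decompose each directional amplitude into uniform and flip parts, track the pair $(\sigma_1+\phi_1,\ \sigma_2+\phi_2)$ through $Q$, $C$, $S$, and solve the resulting $2\times 2$ linear system per edge. Your only departure is packaging the three cases into one system via the signs $\epsilon_a,\epsilon_b$, which is a harmless streamlining of the paper's case-by-case computation.
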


\begin{figure}
\begin{center}
	\includegraphics{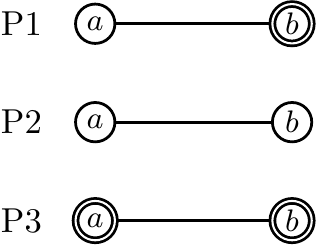}
	\caption{\label{fig:paircases} For a pair of adjacent vertices $a$ and $b$, the three possibilities of whether or not they are marked. The double circle indicates a marked vertex.}
\end{center}
\end{figure}

\begin{proof}
	For each pair of adjacent vertices $a$ and $b$, there are three possibilities for whether or not they are marked, as depicted in Fig.~\ref{fig:paircases}. Let us consider each of these possibilities.

	\emph{P1.} Suppose one vertex is unmarked and the other is marked. Without loss of generality, say $a$ is unmarked and $b$ marked. Now consider the action of $U = SCQ$. The oracle query flips the amplitude at marked vertices (in this case, the $b$ vertex), the coin inverts about the average (so it leaves stationary states alone and flips the sign of flip states), and the shift swaps the amplitudes at $\ket{ab}$ and $\ket{ba}$. Explicitly, here is what each operator does to the amplitudes:
	\begin{align*}
		\ket{ab}&: \sigma_1 + \phi_1 \xrightarrow{Q} \sigma_1 + \phi_1 \xrightarrow{C} \sigma_1 - \phi_1 \xrightarrow{S} -\sigma_2 + \phi_2 \\
		\ket{ba}&: \sigma_2 + \phi_2 \xrightarrow{Q} -\sigma_2 - \phi_2 \xrightarrow{C} -\sigma_2 + \phi_2 \xrightarrow{S} \sigma_1 - \phi_1.
	\end{align*}
	For $\ket{\psi}$ to be stationary, the amplitude before and after the application of $U$ must be the same:
	\begin{align*}
		\sigma_1 + \phi_1 &= -\sigma_2 + \phi_2 \\
		\sigma_2 + \phi_2 &= \sigma_1 - \phi_1.
	\end{align*}
	The solution is given by $\sigma_1 = \phi_2$ and $\phi_1 = -\sigma_2$.

	\emph{P2.} Now suppose that both $a$ and $b$ are unmarked. Then similarly, for $\ket{\psi}$ to be stationary, we have
	\begin{align*}
		\sigma_1 + \phi_1 &= \sigma_2 - \phi_2 \\
		\sigma_2 + \phi_2 &= \sigma_1 - \phi_1.
	\end{align*}
	The solution to this system is $\sigma_1 = \sigma_2$ and $\phi_1 = -\phi_2$.

	\emph{P3.} Lastly, suppose that both $a$ and $b$ are marked. Then for $\ket{\psi}$ to be stationary, we have
	\begin{align*}
		\sigma_1 + \phi_1 &= -\sigma_2 + \phi_2  \\
		\sigma_2 + \phi_2 &= -\sigma_1 + \phi_1.
	\end{align*}
	Here we have $\sigma_1 = -\sigma_2$ and $\phi_1 = \phi_2$.

	A stationary state satisfies all these properties, and a state that satisfies these properties for all edges is stationary.
\end{proof}

\begin{figure}
\begin{center}
	\includegraphics{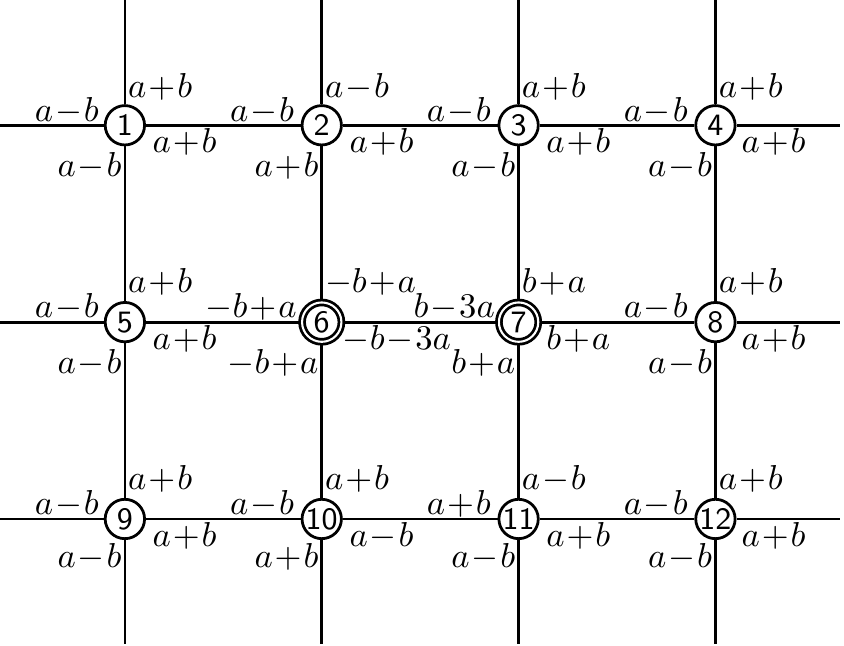}
	\caption{\label{fig:square_pair} A general stationary state for the $4 \times 3$ periodic square grid with an adjacent pair of marked vertices (vertices $6$ and $7$, indicated by double circles). For each amplitude, the uniform component is listed first followed by its flip component.}
\end{center}
\end{figure}

Figure~\ref{fig:square_pair} gives an example of a general stationary state for a pair of adjacent marked vertices on the 2D grid. For each edge, the amplitude is labeled by the uniform contribution plus the flip contribution. For example, consider the edge between vertices $5$ and $6$. For $\ket{5,6}$, we have amplitude $a + b$, which means $\sigma_1 = a$ and $\phi_1 = b$. For $\ket{6,5}$, we have amplitude $-b + a$, which means that $\sigma_2 = -b$ and $\phi_2 = a$. This satisfies P1, where $\sigma_1 = \phi_2 = a$ and $\phi_1 = -\sigma_2 = b$. Similarly, the rest of the edges satisfy Theorem~\ref{thm:stat-states}, so this is a stationary state.

Note that Theorem~\ref{thm:stat-states} exactly characterizes all stationary states of the search, and it greatly expands the number of known stationary states. For example, in Fig.~\ref{fig:square_pair}, $a$ and $b$ are continuous parameters (which also normalize the overall state), so there is an infinite number of stationary states. This contrasts with Nahimovs, Rivosh, and Santos \cite{NR2016,NS2016}, who only considered stationary states where unmarked vertices were uniform states and marked vertices were flip states, as in Fig.~\ref{fig:square_pair_opt}.


\subsection{Optimal Stationary States}

From the previous Theorem, a general stationary state can have both uniform and flip components at each vertex, and this can lead to an infinite number of stationary states. Algorithmically, however, we are interested in the \emph{optimal} stationary state, meaning the stationary state closest to the initial uniform state \eqref{eq:initial}. That is, we are interested in the stationary state $\ket{\psi}$ such that $\left|\braket{\psi(0) | \psi}\right|$ is maximized.

As we prove next, it turns out that this optimal stationary state is precisely the one described by Nahimovs, Rivosh, and Santos \cite{NR2016,NS2016}.

\begin{theorem} 
	\label{thm:opt-stat-states}
	The stationary state $\ket{\psi}$ maximizing $\left|\braket{\psi(0) | \psi}\right|$ satisfies the following properties:
	\begin{enumerate}[(a)]
		\item	the directional state of every unmarked vertex is a uniform state;
		\item	the directional state of every marked vertex is a flip state;
		\item	the amplitudes of adjacent vertices pointing to each other are equal. That is, $\braket{uv|\psi} = \braket{vu|\psi}$ for all $u \sim v$.
	\end{enumerate}
\end{theorem}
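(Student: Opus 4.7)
The plan is to decompose the stationary subspace $V$ as an orthogonal direct sum $V = V_1 \oplus V_2$, show that $V_2 \perp \ket{\psi(0)}$, and conclude that the overlap-maximizing stationary state is the normalized projection of $\ket{\psi(0)}$ onto $V_1$, from which (a), (b), and (c) all follow. Here $V_1$ is the set of stationary states whose unmarked vertices carry only a uniform directional component and whose marked vertices carry only a flip directional component; $V_2$ is the set of stationary states with the roles exchanged.

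To justify the decomposition, note first that by Lemma~\ref{lemma:uniformflip} the uniform and flip parts at each vertex are orthogonal, so $V_1 \perp V_2$ is immediate. To see that $V_1$ and $V_2$ are themselves stationary subspaces that together span $V$, I would inspect each edge condition of Theorem~\ref{thm:stat-states} and observe that it decouples into one equation involving only $V_1$-components and one involving only $V_2$-components. For instance, P1's relation $\sigma_1 = \phi_2$ couples an unmarked uniform to a marked flip (both $V_1$), while $\phi_1 = -\sigma_2$ couples an unmarked flip to a marked uniform (both $V_2$); P2 and P3 decouple analogously. Hence the uniform-at-unmarked, flip-at-marked projection of any stationary state is again stationary, so $V = V_1 \oplus V_2$.

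The key step is then $\braket{\psi(0)|\psi_2} = 0$ for every $\ket{\psi_2} \in V_2$. Because unmarked vertices in a $V_2$-state are pure flip states, their direction-summed amplitudes vanish, and
\[ \braket{\psi(0)|\psi_2} = \frac{1}{\sqrt{2|E|}} \sum_{v \in M} d_v \tau_v, \]
where $\tau_v$ denotes the uniform amplitude at marked vertex $v$ and $M$ is the marked set. Split each $d_v$ into the number of its marked and unmarked neighbors. The marked-neighbor part rearranges into a sum over marked-marked edges of $\tau_v + \tau_{v'}$, vanishing edge by edge by the $V_2$ half of P3. For the unmarked-neighbor part I would use the $V_2$ half of P1 to replace each $\tau_v$ by the negative of the flip amplitude of the adjacent unmarked vertex toward $v$, then invoke the flip-sum condition $\sum_w \phi_u^w = 0$ at each unmarked $u$ to recast the sum as one over directed unmarked-unmarked edges; those cancel in pairs by the $V_2$ half of P2.

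With $V_2 \perp \ket{\psi(0)}$ in hand, the projection of $\ket{\psi(0)}$ onto $V$ coincides with its projection onto $V_1$, so the overlap-maximizing stationary state lies in $V_1$, establishing (a) and (b). Property (c) then follows immediately: in any $V_1$-state the two oppositely directed amplitudes on an edge are $\sigma_u$ vs.\ $\phi_v^u$, or $\sigma_u$ vs.\ $\sigma_{u'}$, or $\phi_v^{v'}$ vs.\ $\phi_{v'}^v$, and each pair is equated by the $V_1$ half of the corresponding case of Theorem~\ref{thm:stat-states}. The main obstacle is the vanishing of $\sum_{v \in M} d_v \tau_v$, which requires carefully combining the $V_2$ halves of all three edge conditions with the flip-sum identity at unmarked vertices; once that cancellation is in place, the remainder is essentially bookkeeping.
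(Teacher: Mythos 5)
Your proposal is correct, and its overall skeleton matches the paper's (drop the flip parts at unmarked vertices, show the uniform parts at marked vertices are also orthogonal to $\ket{\psi(0)}$, then check the truncated state is still stationary and read off (c)), but the crucial step is proved by a genuinely different argument. Both proofs reduce to the identity $\sum_{v\in M} d_v\tau_v=0$, i.e.\ that the uniform components at marked vertices contribute nothing to $\braket{\psi(0)|\psi}$. The paper gets this from a global conservation argument: the total sum $s$ of all amplitudes of $\ket{\psi}$ is unchanged by $S$ (a permutation of amplitudes) and by $C$ (which preserves uniform parts, while flip parts sum to zero anyway), but $Q$ negates the contribution of the marked vertices, so stationarity $s=s'$ forces $\sum_{v\in M}\sum_i\braket{i|v_\sigma}=0$ in one line. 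You instead derive the same identity by a local, edge-by-edge cancellation using only the ``$V_2$ halves'' of conditions P1--P3 together with the flip-sum identity at unmarked vertices; I checked the bookkeeping (marked--marked edges cancel by $\tau_v=-\tau_{v'}$ from P3; the unmarked-neighbor terms convert via P1 into flip amplitudes of the adjacent unmarked vertices, then via the flip-sum identity and P2 into a signed sum over directed unmarked--unmarked edges that cancels in pairs) and it goes through. Your framing of the stationary subspace as an orthogonal direct sum $V=V_1\oplus V_2$, with the conditions of Theorem~\ref{thm:stat-states} decoupling across the two summands, is a nice structural addition that the paper handles only implicitly when it verifies that the truncated state still satisfies P1--P3; it also cleanly delivers part (c) and makes explicit that the maximizer is the normalized projection of $\ket{\psi(0)}$ onto the stationary subspace. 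The trade-off is length: the paper's conservation trick is shorter, while your argument is purely local and pinpoints exactly which stationarity relations are responsible for the orthogonality.
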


\begin{proof}
	We will prove that if $\ket{\psi}$ is an arbitrary stationary state, then the flip part of each unmarked vertex and the uniform part of each marked vertex together contribute zero to the inner product $\braket{\psi(0) | \psi}$. Hence we can remove them, resulting in each unmarked vertex being a uniform state and each marked vertex being a flip state. Then upon normalization, this has maximal overlap with the initial uniform state.

	The contribution from any flip state to the inner product $\braket{\psi(0) | \psi}$ is 0, since we showed in Lemma~\ref{lemma:uniformflip} that any flip state is orthogonal to a uniform state. Thus the flip parts of the unmarked vertices contribute nothing to $\braket{\psi(0) | \psi}$, so we remove them. This proves part (a) of the Theorem.

	Next we show that the total contribution from the uniform parts of the marked vertices to $\braket{\psi(0) | \psi}$ is equal to 0. Let the total sum of the amplitudes of $\ket{\psi}$ and $U\ket{\psi}$ be $s$ and $s'$, respectively. Beginning with $s$,
	\[ s = \sum_{v=1}^N \sum_{i=1}^{d_v} \braket{i | v_c}. \]
	From Lemma~\ref{lemma:uniformflip}, we can express a directional state $\ket{v_c} = \ket{v_{\sigma}} + \ket{v_{\phi}}$, where $\ket{v_{\sigma}}$ is a uniform state and $\ket{v_{\phi}}$ is a flip state. The flip states in the sum can be ignored, since $\sum_{i=1}^{d_v} \braket{i|v_\phi} = 0$. Then
	\[ s = \sum_{v=1}^N \sum_{i=1}^{d_v} \braket{i|v_{\sigma}}. \]
	Now for $s'$. The oracle flips the sign of the amplitude at marked vertices, so
	\[ s' = \sum_{v \notin M} \sum_{i=1}^{d_v} \braket{i|v_{\sigma}} - \sum_{v \in M} \sum_{i=1}^{d_v} \braket{i|v_{\sigma}}, \]
	where $M$ is the set of the marked vertices. Since $\ket{\psi}$ is stationary, $s = s'$. Therefore
	\[ \sum_{v \in M} \sum_{i=1}^{d_v} \braket{i|v_{\sigma}}=0. \]
	Thus if we look at the contribution from the uniform states at marked vertices to $\braket{\psi(0) | \psi}$, it is indeed equal to 0:
	\begin{align*}
		\sum_{v \in M} \sum_{i=1}^{d_v} \braket{v_c(0)|v_{\sigma}}
		&= \sum_{v \in M} \sum_{i=1}^{d_v} \braket{v_c(0)|i} \braket{i|v_{\sigma}} \\
		&= \sum_{v \in M} \sum_{i=1}^{d_v} \frac{1}{\sqrt{2|E|}} \braket{i|v_{\sigma}} \\
		&= \frac{1}{\sqrt{2|E|}} \sum_{v \in M} \sum_{i=1}^{d_v} \braket{i|v_{\sigma}} \\
		&= 0.
	\end{align*}
	Hence we remove the uniform parts from the marked vertices. This proves part (b) of the Theorem.
	
	Now that we have removed the flip components from unmarked vertices and the uniform components from marked vertices, consider what this did to the properties of Theorem~\ref{thm:stat-states}:
	\begin{itemize}
		\item	For P1, we now have $\phi_1 = \sigma_2 = 0$ and $\sigma_1 = \phi_2$.
		\item	For P2, we now have $\phi_1 = \phi_2 = 0$ and $\sigma_1 = \sigma_2$.
		\item	For P3, we now have $\sigma_1 = \sigma_2 = 0$ and $\phi_1 = \phi_2$.
	\end{itemize}
	So for the resulting state to be stationary (i.e., satisfy these three properties), we require that the amplitudes of adjacent vertices pointing to each other are equal, yielding part (c) of the Theorem.
\end{proof}

As an example of this reduction from a general stationary state to the optimal one, consider again the marked pair of adjacent vertices in Fig.~\ref{fig:square_pair}. We can remove the flip components from the unmarked vertices (the $\pm b$ parts) and the uniform components from the marked vertices (also the $\pm b$ parts), resulting in Fig.~\ref{fig:square_pair_opt}, which is the optimal stationary state (with normalization).


\section{Existence of Stationary States}

In Theorems~\ref{thm:stat-states} and \ref{thm:opt-stat-states}, general and optimal stationary states are given in terms of the relations between the uniform and flip components of each amplitude, depending on whether vertices are marked or not (cases P1, P2, and P3 in Fig.~\ref{fig:paircases}). In practice, however, finding a solution to all these conditions, if one exists, can be difficult. So in this section, we give two theorems for the existence of stationary states.

To investigate the existence of stationary states, it suffices to study optimal ones. This is because any general stationary state can be optimized to have maximal overlap with the initial uniform state \eqref{eq:initial} according to Theorem~\ref{thm:opt-stat-states}, namely by removing the flip components from unmarked vertices and the uniform components from unmarked vertices (and normalizing). So if no optimal stationary states exist for a certain graph and configuration of marked vertices, then no general stationary states exist, either. Furthermore, optimal stationary states determine how closely the initial uniform state \eqref{eq:initial} of the quantum walk search algorithm is to being stationary.

\begin{figure}
\begin{center}
	\includegraphics{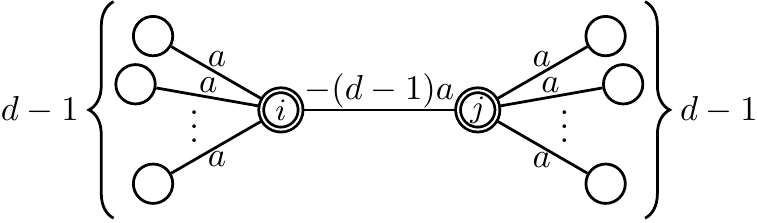}
	\caption{\label{fig:pair} The optimal stationary state for a marked pair of adjacent vertices of equal degree.}
\end{center}
\end{figure}

\begin{figure}
\begin{center}
	\includegraphics{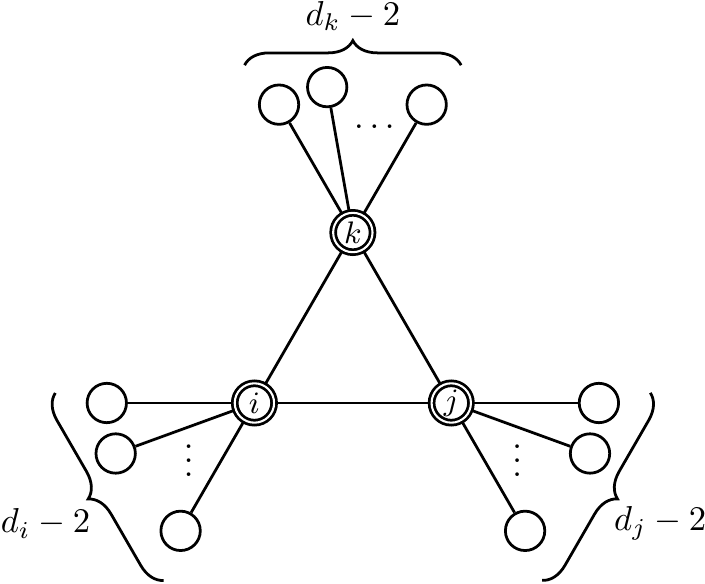}
	\caption{\label{fig:triangle} A marked triangle, where vertices can have unequal degree.}
\end{center}
\end{figure}

Before deriving our two theorems, let us provide some additional background. Nahimovs, Rivosh, and Santos \cite{NR2016,NS2016} showed that for general graphs, a marked pair of adjacent vertices has an optimal stationary state if both marked vertices have the same degree. This is depicted in Fig.~\ref{fig:pair}. Note that we only need to assign one amplitude to each edge of the graph since optimal stationary states have the same amplitude in both directions, i.e., part (c) of Theorem~\ref{thm:opt-stat-states}. This contrasts with a marked triangle in Fig.~\ref{fig:triangle}, where Nahimovs, Rivosh, and Santos gave the optimal stationary state, even if the marked vertices have different degrees. This raises the question of why equal degrees were used for the pair while unequal degrees were allowed for the triangle. Our next two theorems precisely explain why: It is because the pair is bipartite, which has constraints for the existence of stationary states, while the triangle is non-bipartite, which always has a stationary state.

Assume that the marked vertices form a connected component $M$. The unmarked vertices may form one or more connected components $U_1, \ldots, U_{k_u}$. To construct an optimal stationary state, we first assign the amplitudes in each $U_i$ so that each vertex is the same uniform state. This ensures that the unmarked vertices are uniform and the amplitudes of adjacent unmarked vertices pointing to each other are equal. Now we must determine how to assign the amplitudes of the marked vertices.

A marked vertex $v \in M$ is connected to some of the unmarked vertices. Since these unmarked vertices have already been assigned amplitudes, the amplitudes on these edges are also known. Let us say the sum of the amplitudes on these edges is equal to $\Sigma$. As $v$ is marked, it should be a flip state, so the sum of the edges going from $v$ to the other marked vertices is equal to $-\Sigma$. We call this value the \emph{shortage} at vertex $v$ and denote it by $s_v = -\Sigma$. Thus each vertex of $M$ has some shortage value, and we want to know when it is possible to assign the amplitudes on the edges between the marked vertices so as to neutralize all of the shortages.

In the following two theorems, we consider when $M$ is bipartite and non-bipartite. If it is bipartite, then the shortages can be neutralized if and only if the sum of the shortages on both partite sets are equal. On the other hand, if the marked vertices are non-bipartite, then the shortages can always be neutralized. This explains why Nahimovs, Rivosh, and Santos used equal degrees for the pair of marked vertices in Fig.~\ref{fig:pair}, whereas different degrees were allowed for the marked triangle in Fig.~\ref{fig:triangle}.


\subsection{\label{sec:bip} Bipartite Marked Component}

\begin{theorem}
	\label{thm:bip}
	If $M$ is bipartite, then we can assign amplitudes to neutralize the shortages at each marked vertex if and only if the sum of the shortages on both partite sets are equal.
\end{theorem}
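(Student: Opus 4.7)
The plan is to prove necessity by a counting argument and sufficiency by a constructive algorithm based on a spanning tree of $M$.

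For necessity, let $A$ and $B$ denote the two partite sets of $M$, and let $x_e$ denote the amplitude to be assigned to an edge $e$ of $M$. The shortage condition at a vertex $v \in M$ reads $\sum_{e \ni v} x_e = s_v$. Summing this equation over $v \in A$ and over $v \in B$ separately, each edge of $M$ is counted exactly once in each of the two sums, because its two endpoints lie in different partite sets. Both sums therefore equal $\sum_e x_e$, which forces $\sum_{v \in A} s_v = \sum_{v \in B} s_v$.

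For sufficiency, I would proceed by induction on $|M|$. Since $M$ is connected, fix a spanning tree $T$ of $M$, set $x_e = 0$ on every edge of $M$ outside $T$, and determine the remaining amplitudes on $T$ by leaf-stripping: pick a leaf $\ell$ of $T$ with unique tree neighbor $p$, assign $x_{\ell p} = s_\ell$, delete $\ell$ from $T$, and update $s_p \leftarrow s_p - s_\ell$. Without loss of generality $\ell \in A$ and $p \in B$, so both partite sums of shortages decrease by $s_\ell$ in the reduced tree, which preserves the balance condition. The base case is a single marked vertex: with one partite set empty, balance forces its remaining shortage to be $0$, so no further assignment is needed, and the construction terminates after $|M|-1$ steps with a valid assignment.

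The main technical point will be verifying cleanly that each leaf-stripping step preserves both the bipartite structure (obvious, since deleting a vertex leaves a bipartite subgraph) and the shortage balance (the computation above). It may also be worth remarking, although not required by the statement, that when $M$ contains cycles the solution is highly non-unique — one may add amplitudes circulating around any even cycle — but existence is all the theorem asks for, so the spanning-tree solution suffices.
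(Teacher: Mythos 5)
Your proof is correct, and its sufficiency half takes a genuinely different route from the paper's. For necessity the two arguments coincide: since every edge of the bipartite component $M$ has one endpoint in each partite set, summing the per-vertex conditions $\sum_{e \ni v} x_e = s_v$ over either set gives $\sum_e x_e$, so the two shortage sums must agree. For sufficiency the paper does not use a spanning tree: it picks a vertex $v$, lets its removal split $M$ into components $C_1, \dots, C_t$, computes for each component the signed imbalance $b_i = \sum_{x \in C_i \cap X} s_x - \sum_{y \in C_i \cap Y} s_y$, and assigns $-b_i/n_i$ to each of the $n_i$ edges from $v$ into $C_i$; this neutralizes $s_v$ (because $s_v = -\sum_i b_i$ by the balance hypothesis) and simultaneously rebalances every $C_i$, so the procedure recurses on the components. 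Your leaf-stripping of a spanning tree $T$, with all non-tree edges set to zero, is arguably cleaner: the only invariant is the single balance equation, each step decrements $|M|$ by one, and the base case (a lone vertex whose shortage must vanish by balance) is immediate. The two points worth writing out carefully are exactly the ones you flag: a leaf $\ell$ of $T$ may still have non-tree edges inside $M$, but these carry amplitude $0$, so $x_{\ell p} = s_\ell$ really does neutralize $\ell$; and since $\ell$ and $p$ lie in opposite partite sets, the update $s_p \leftarrow s_p - s_\ell$ decreases both partite sums by the same amount. Both constructions prove existence; the paper's spreads the correction over all edges leaving $v$, while yours supports the solution on a tree, which also makes your closing remark about non-uniqueness (adding circulations around even cycles) transparent.
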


\begin{figure}
\begin{center}
	\subfloat[]{
		\includegraphics{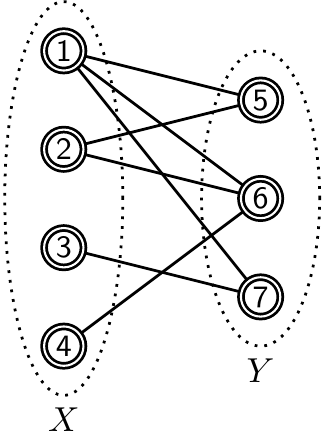}
		\label{fig:bipartite}
	} \quad
	\subfloat[]{
		\includegraphics{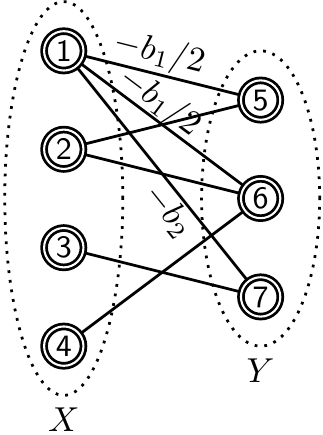}
		\label{fig:bipartite_assign}
	}

	\subfloat[]{
		\includegraphics{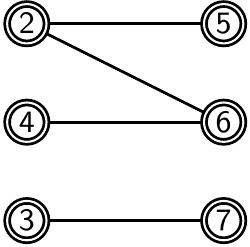}
		\label{fig:bipartite_disconnect}
	}
	\caption{(a) A bipartite marked connected component. (b) Assignments of vertex $1$'s edges within the marked connected component to neutralize its shortage, making it a flip state. (c) The marked vertices after removing vertex $1$.}
\end{center}
\end{figure}

\begin{proof}
	Let the partite sets be $X$ and $Y$. ($\Rightarrow$) If the state is optimally stationary, then the amplitudes of adjacent vertices pointing to each other are equal. Then the sum of the shortages at $X$ must be equal to that of $Y$.

	($\Leftarrow$) Now we assume that the sum of the shortages of both partite sets are equal. We prove that an optimal stationary state exists by giving a procedure for assigning the amplitudes.

	Pick any vertex $v \in M$ with $s_v \neq 0$. Without loss of generality suppose that $v \in X$. Suppose that removing $v$ and all its incident edges breaks up $M$ into connected components $C_1, \ldots, C_t$. For example, consider the marked connected component in Fig.~\ref{fig:bipartite}, which is bipartite. Consider vertex $v = 1$. If we remove it and its edges, then we have two connected components $C_1 = \{ 2, 4, 5, 6 \}$ and $C_2 = \{ 3, 7 \}$.

	For each connected component $C_i$, define $b_i$, where we take the shortages in the $X$ partite set and subtract the shortages in the $Y$ partite set:
	\[ b_i = \!\!\!\! \sum_{x \in C_i \cap X} \!\!\!\! s_x - \!\!\!\! \sum_{y \in C_i \cap Y} \!\!\!\! s_y. \]
	For our example, we have
	\begin{align*}
		b_1 &= s_2 + s_4 - s_5 - s_6 \\
		b_2 &= s_3 - s_7.
	\end{align*}
	Now consider the shortage at $v$, plus these $b_i$'s:
	\[ s_v + \sum_{i=1}^t b_i = \sum_{x \in X} s_x - \sum_{y \in Y} s_y = 0. \]
	This equals zero because the sum of the shortages of both partite sets are equal. Solving for the shortage at $v$, we find that it is equal to the negative of the sum of the $b_i$'s:
	\[ s_v = - \sum_{i=1}^t b_i. \]
	So for our example in Fig.~\ref{fig:bipartite}, we have $s_1 = -(b_1 + b_2)$.

	Let the number of edges from $v$ to $C_i$ be $n_i$. Then on each of these edges we assign an amplitude of $-b_i/n_i$. By construction, this neutralizes the shortage of $v$, since
	\[ \sum_{i=1}^t n_i \cdot\left(-\frac{b_i}{n_i}\right) = -\sum_{i=1}^t b_i = s_v. \]
	For our example, since $v$ connects to $C_1$ through two edges, we assign $-b_1/2$ to each of those edges. And since $v$ connects to $C_2$ through one edge, we assign $-b_2$ to that edge. This is shown in Fig.~\ref{fig:bipartite_assign}. By construction, we have now neutralized the shortage $s_1 = -(b_1 + b_2)$ at $v = 1$, making it a flip state.

	Besides neutralizing the shortage of $v$, this assignment also causes the sums of the shortages of $C_i$ on both of its partite sets to now be equal. For our example, in Fig.~\ref{fig:bipartite_assign}, the shortages of vertices 5, 6, and 7 have changed due to the assignments for $v = 1$. They are now
	\begin{align*}
		s_5' &= s_5 + \frac{b_1}{2} \\
		s_6' &= s_6 + \frac{b_1}{2} \\
		s_7' &= s_7 + b_2.
	\end{align*}
	Now let us sum the shortages for $C_1$. The sum in $X$ is $s_2 + s_4$, and the sum of the shortages in $Y$ is $s_5' + s_6' = s_5 + s_6 + b_1 = s_2 + s_4$. So they are equal. This similarly holds for $C_2$. In general, we subtract the assigned amplitudes from the shortages of the neighbors of $v$ and denote the new shortages by $s'$. Each neighbor of $v$ in $C_i$ is in $Y$, and each such neighbor has $s'_y=s_y+b_i/n_i$. As there are $n_i$ such neighbors, we have that
	\begin{align*}
		\sum_{x \in C_i \cap X} \!\!\!\! s'_x - \!\!\!\!\! \sum_{y \in C_i \cap Y} \!\!\!\! s'_y &= \!\!\!\! \sum_{x \in C_i \cap X} \!\!\!\! s_x - \!\!\!\!\! \sum_{y \in C_i \cap Y} \!\!\!\! s_y - b_i \\
		& = b_i-b_i = 0.
	\end{align*}

	Each $C_i$ is now a separate bipartite connected component, and the sum of the shortages on both partite sets in each $C_i$ are equal. Visualizing this for our example, removing vertex $v = 1$ leaves the two disconnected components, as shown in Fig.~\ref{fig:bipartite_disconnect}. Each of these $C_i$'s has the property that the sum of the shortages on both partite sets are equal, so we can recursively repeat the assignment procedure until the shortage at each vertex is $0$. The recursion stops when we have pairs, which can be made to have zero shortage. For example, consider the connected pair of vertices $3$ and $7$ in Fig.~\ref{fig:bipartite_disconnect}. Since the sum of shortages in $X$ and $Y$ are equal, we have that $s_3 = s_7$. Using the procedure, let $v = 3$. Then $C_1 = \{ 7 \}$ and $b_1 = -s_7$. So we assign the edge an amplitude of $-b_1 = s_7$. Since $s_3 = s_7$, we have neutralized the shortages of both vertices.
\end{proof}

Applying this to the marked pair of adjacent vertices in Fig.~\ref{fig:pair}, the unmarked vertices are assumed to form a single connected component, so all their amplitudes are $a$. Then the marked vertices have shortages $s_i = (d_i-1)a$ and $s_j = (d_j-1)a$. For a stationary state to exist, these shortages must be equal, which means the marked vertices must have equal degree $d_i = d_j$. This proves why Nahimovs, Rivosh, and Santos \cite{NR2016,NS2016} could only find a stationary state with this requirement.

\begin{figure}
\begin{center}
	\includegraphics{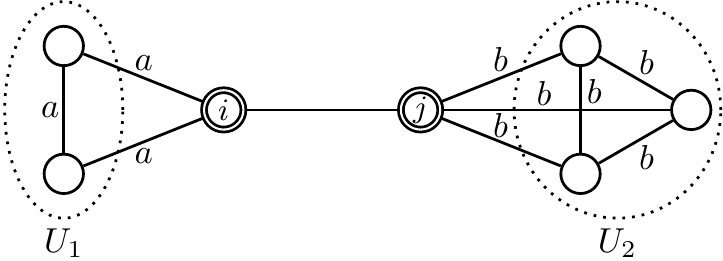}
	\caption{\label{fig:pair_disjoint} A graph with a marked pair of adjacent vertices and two unmarked connected components.}
\end{center}
\end{figure}

Now say the unmarked vertices form multiple connected components $U_1, \ldots, U_{k_u}$. For example, consider the marked pair of adjacent vertices in Fig.~\ref{fig:pair_disjoint}, where each marked vertex is connected to a different unmarked connected component $U_1$ and $U_2$. For each vertex in $U_1$, we assign all edges the same amplitude $a$, while for $U_2$, we assign the value $b$. Note that any value of $a$ and $b$ makes the unmarked vertices uniform, as desired. But for an optimal stationary state to exist, we specifically require that $2a = 3b$ so that the sum of the shortages on marked vertices $i$ and $j$ are equal.

In general, it may not be possible to assign uniform states to the $U_i$'s so that the sum of the shortages on the partite sets of $M$ are equal, so stationary states do not always exist. A simple example is a graph of two vertices connected by a single edge, where one of the two vertices is marked. Then there is no assignment to the edge that defines a stationary state. Furthermore, one can construct infinitely many examples where there is one unmarked component $U$ and one marked bipartite component $M$ such that the sums of the shortages cannot be equal no matter how the amplitudes are assigned on the unmarked component. Given this, we leave the details of how to assign the unmarked components for further research.


\subsection{Non-Bipartite Marked Component}

\begin{figure*}
\begin{center}
	\subfloat[]{
		\includegraphics{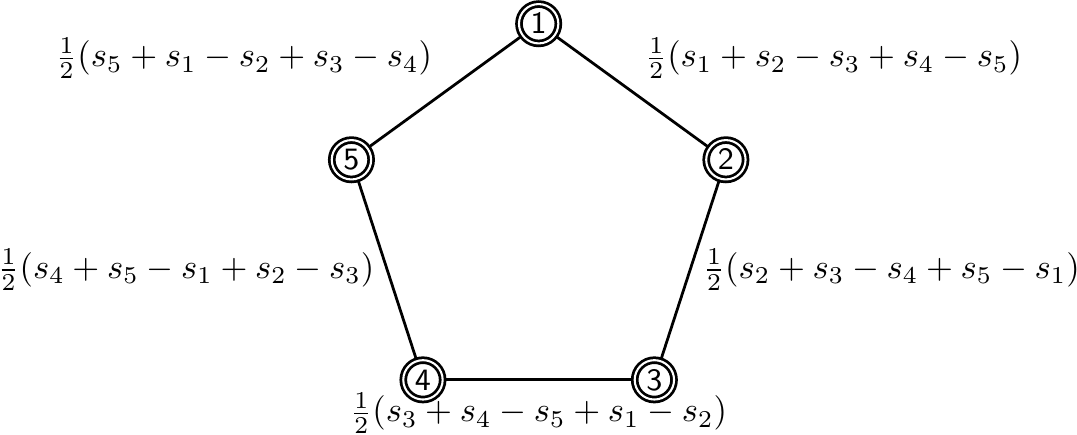}
		\label{fig:cycle}
	} \quad
	\subfloat[]{
		\includegraphics{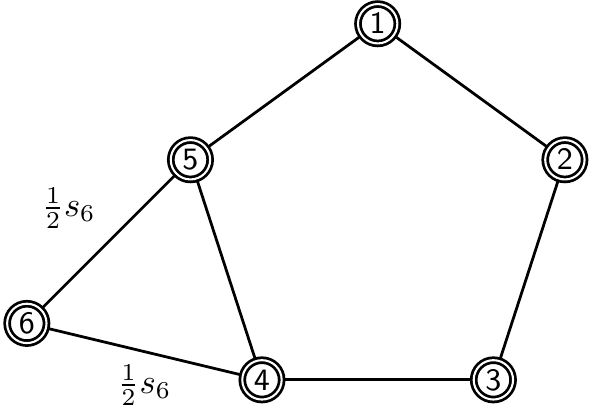}
		\label{fig:cycle_extra}
	}
	\caption{(a) A non-bipartite marked connected component, entirely constituting an odd cycle. Amplitudes have been assigned to the edges to neutralize the shortage at each vertex, yielding an optimal stationary state. (b) With an additional connected marked vertex. Amplitudes have been assigned to the edges to neutralize the shortage at the additional vertex.}
\end{center}
\end{figure*}

Now we consider the case when the marked vertices form a non-bipartite connected component $M$. In contrast to the previous theorem for bipartite graphs, here no constraints arise, so stationary states always exist for non-bipartite $M$.

\begin{theorem}
	\label{thm:non-bip}
	If $M$ is non-bipartite, then we can always assign the amplitudes to neutralize the shortages at each marked vertex.
\end{theorem}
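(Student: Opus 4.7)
The plan is to reduce to the bipartite Theorem~\ref{thm:bip} by exploiting a single odd cycle inside $M$. I would first fix a spanning tree $T$ of $M$; as a tree, $T$ is bipartite, with partite sets $X$ and $Y$. Non-bipartiteness of $M$ guarantees the existence of a non-tree edge $f = uw$ whose endpoints lie in the same partite set of $T$, since otherwise every fundamental cycle with respect to $T$ would be even and, by a symmetric-difference argument, so would every cycle of $M$, contradicting the assumption. Pick such an $f$ and assume without loss of generality that $u, w \in X$.

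Next, I would assign amplitude $0$ to every non-tree edge of $M$ except $f$, which is left as a free parameter $x_f$. With this reduction, the shortage that the tree amplitudes must neutralize at each vertex $v \in M$ becomes $s'_v = s_v$ for $v \notin \{u, w\}$ and $s'_v = s_v - x_f$ for $v \in \{u, w\}$. Applying Theorem~\ref{thm:bip} to the bipartite graph $T$ with shortages $s'$, a neutralizing assignment on $E(T)$ exists if and only if $\sum_{v \in X} s'_v = \sum_{v \in Y} s'_v$. Substituting, this becomes the single linear equation $\sum_{v \in X} s_v - 2 x_f = \sum_{v \in Y} s_v$, which is uniquely solvable for $x_f$. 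Feeding this value back into Theorem~\ref{thm:bip} produces the amplitudes on $E(T)$ and completes the assignment on all of $E(M)$.

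The main obstacle is the structural step that a non-tree edge with both endpoints in the same partite set of $T$ must exist; this is exactly what encodes the extra degree of freedom provided by an odd cycle, and it is precisely what makes the coefficient of $x_f$ in the balance equation equal to the nonzero value $-2$ rather than $0$. Once this is established, the remainder is bookkeeping built on top of Theorem~\ref{thm:bip}: the single constraint that could obstruct solvability in the bipartite case is absorbed by the single new unknown $x_f$ coming from the odd cycle, which is why solvability becomes unconditional. One could alternatively present this as a linear-algebra fact (the incidence-style map sending edge amplitudes to vertex shortages has trivial cokernel exactly when the graph is non-bipartite), but the spanning-tree-plus-odd-edge construction stays closer to the constructive style used in the proof of Theorem~\ref{thm:bip}.
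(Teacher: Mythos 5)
Your proof is correct, but it takes a genuinely different route from the paper. The paper works directly with an odd cycle $\{v_1,\dots,v_k\}$ in $M$: it writes down an explicit closed-form assignment on the cycle edges, namely $\braket{v_iv_{i+1}|\psi}=\tfrac12\sum_{j}(-1)^{(i-j)\bmod k}s_{v_j}$, verifies by a parity cancellation that consecutive edge amplitudes sum to $s_{v_i}$, sets all chords to zero, and then peels off the vertices outside the cycle one at a time (farthest first, assigning $s_u/\deg(u)$ to each of $u$'s edges) until only the cycle remains. Your argument instead isolates the structural reason non-bipartiteness helps: a spanning tree $T$ is bipartite, a non-tree edge $f$ joining two vertices of the same $T$-class must exist precisely because $M$ is non-bipartite, and that single edge contributes a free parameter $x_f$ whose coefficient $-2$ in the partite-balance equation lets you satisfy the one constraint that Theorem~\ref{thm:bip} imposes; the rest is delegated to the bipartite procedure on $T$. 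The one point worth flagging is that you are invoking Theorem~\ref{thm:bip} with \emph{modified} shortages $s'_v$ rather than the shortages as literally defined in the paper (sums of amplitudes to unmarked neighbors); this is harmless because the proof of Theorem~\ref{thm:bip} is a purely combinatorial statement about meeting arbitrary vertex targets with edge values on a bipartite connected graph, but you should say so explicitly. On balance, the paper's proof is self-contained and yields a fully explicit formula, while yours is shorter, reuses the bipartite machinery, and makes transparent the linear-algebra picture (the single obstruction in the bipartite case is absorbed by the single new unknown coming from the odd chord), which is exactly the dichotomy the two theorems are meant to explain.
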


\begin{proof}
	We prove the theorem by giving an explicit procedure for assigning the amplitudes. Since $M$ is non-bipartite, there exists a cycle of odd length $\{v_1, \ldots, v_k\}$.

	For now, suppose that this cycle contains all vertices of $M$. Then there always exists a solution for assigning the amplitudes on the edges of this cycle so to neutralize the shortages. It is given by
	\[ \braket{v_i v_{i+1}|\psi} = \braket{v_{i+1}v_i|\psi} = \frac 1 2 \sum_{j=1}^k (-1)^{(i-j) \Mod k} s_{v_j}. \]
	For example, for a 5-cycle of marked vertices, we assign the edges as shown in Fig.~\ref{fig:cycle}. So the assignment of the edges neutralizes the shortages. In general, we have
	\begin{align*}
		& \braket{v_{i-1}v_i|\psi} + \braket{v_i v_{i+1}|\psi} \\
		&\enspace = \frac 1 2 \sum_{j=1}^k \left[(-1)^{(i-1-j) \Mod k} + (-1)^{(i-j) \Mod k}\right] s_{v_j}.
	\end{align*}
	Here $x \Mod k$ is an integer from $0$ to $k-1$, so in particular, $(-1)^{-1 \Mod k} = (-1)^{k-1} = 1$ as $k$ is odd. The only time when $(i-1-j) \Mod k$ and $(i-j) \Mod k$ are equal$\Mod 2$ is when $i = j$. Otherwise they sum up to 0. Hence the value of this sum is equal to $s_{v_i}$.

	For all the other edges not in this cycle we assign amplitude 0. For example, in Fig.~\ref{fig:cycle}, if there were an edge connecting vertices $1$ and $3$, we would assign zero amplitude to it.

	Now suppose there are some vertices of $M$ not in this cycle. Pick a vertex $u$ from among them that is the farthest from the cycle. More formally, $\min_{i=1}^k d(u,v_i)$ is maximized for $u$, where $d(a,b)$ is the shortest distance between $a$ and $b$ in $M$. Let the degree of $u$ in this component be $\text{deg}(u)$. Assign an amplitude of $s_u/\text{deg}(u)$ for each edge from $u$ in this component. For example, Fig.~\ref{fig:cycle_extra} now includes an extra vertex labeled $6$, and we assign its edges within $M$ the value $s_6/2$, neutralizing its shortage.

	This way, the shortage of $u$ is neutralized and now we are left with a connected component $M \setminus u$, since $u$ was the farthest from the cycle. As $u$ did not belong to the cycle, we can repeat the procedure recursively until the cycle contains all vertices of the component.
\end{proof}

Again, this explains why Nahimovs, Rivosh, and Santos \cite{NR2016,NS2016} were able to find a stationary state for the marked triangle in Fig.~\ref{fig:triangle}, where each marked vertex could have different degrees.


\subsection{Multiple Marked Connected Components}

We end this section with some brief remarks about multiple marked connected components. In the previous two theorems, the marked vertices were all connected to each other, forming a single marked connected component $M$. Now say there are multiple such connected components $M_1, M_2, \ldots, M_{k_m}$. For the bipartite $M_i$'s, if the sum of the shortages of both partite sets are equal, then we can assign amplitudes to neutralize the shortages using Theorem~\ref{thm:bip}. As discussed in Section~\ref{sec:bip}, however, changing the uniform states of unmarked components $U_1, \ldots, U_{k_u}$, can change these sums. On the other hand, the non-bipartite $M_i$'s can always have their shortages neutralized using Theorem~\ref{thm:non-bip}.


\section{SKW Oracle}

Our results so far (Theorems \ref{thm:stat-states}--\ref{thm:non-bip}) pertained to the search operator $U = SCQ$ \eqref{eq:U}, which utilized the standard oracle $Q$ from Grover's algorithm that flips the sign of the marked vertices. We can rewrite this search operator another way, however:
\[ U = S \cdot \left\{ \begin{array}{rl}
	C & \text{on unmarked vertices} \\
	-C & \text{on marked vertices}
\end{array} \right\}. \]
This first applies the Grover diffusion coin $C$ on the unmarked vertices and the negative of it $-C$ to the marked vertices, which incorporates the oracle. Then it applies the flip-flop shift.

This motivates a different search operator, due to Shenvi, Kempe, and Whaley (SKW) \cite{SKW2003}, where we still apply $C$ to the unmarked vertices, but now apply $-I$ (minus the identity) to the marked vertices:
\[ U' = S \cdot \left\{ \begin{array}{rl}
	C & \text{on unmarked vertices} \\
	-I & \text{on marked vertices}
\end{array} \right\}. \]
Although this SKW operator $U'$ has a different notion of the oracle, it is equivalent to the usual Grover search operator $U$ whenever each marked vertex is only adjacent to identically-evolving vertices. For example, for search on the complete graph of $N$ vertices (i.e., the quantum walk formulation of Grover's algorithm) with a single marked vertex, the marked vertex is only adjacent to unmarked vertices, which evolve identically. Then the oracle/coin $-C$ in $U$ and $-I$ in $U'$ act identically on the marked vertex. This is explicitly proved in Section 2 of \cite{Wong10}, and after $\pi\sqrt{N} / 2\sqrt{2}$ applications of either operator, the system achieves a success probability of $1/2$ when measuring the position of the particle (although an internal-state measurement can further improve this \cite{Wong18}). As another example, the two operators $U$ and $U'$ are equivalent for search on arbitrary-dimensional periodic square lattices with a single marked vertex \cite{AKR2005}.

On the other hand, $U$ and $U'$ can behave very differently for other configurations. For example, from Section 6 of \cite{Wong10}, if there are multiple marked vertices $k \ge 2$ on the complete graph, then with $U$, the system reaches a success probability of $4k(k-1)/(2k-1)^2$ with a runtime of $\pi\sqrt{N} / \sqrt{2(2k-1)}$, whereas with $U'$, it reaches a success probability of $1/2$ (regardless of $k$) in time $\pi\sqrt{N} / 2\sqrt{2k}$. These two operators were also compared for the simplex of complete graphs with a fully marked clique in Fig.~\ref{fig:simplex_full} \cite{WA2015}, as well as the 2D periodic square lattice with multiple marked locations \cite{NR2015}.

Given the possibly distinct behavior of the SKW operator, we now investigate its stationary states, i.e., the 1-eigenvectors of $U'$. Perhaps surprisingly, we show in the following theorem that all such stationary states are orthogonal to the initial uniform state \eqref{eq:initial}.

\begin{theorem}
	\label{thm:SKW}
	With the SKW search operator $U'$, for every stationary state $\ket{\psi}$, we have $\braket{\psi(0)|\psi}=0$.
\end{theorem}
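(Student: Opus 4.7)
The plan is to adapt the edge-by-edge analysis of Theorem~\ref{thm:stat-states} to the SKW coin $C'$ (which applies $C$ on unmarked and $-I$ on marked vertices) in place of $CQ$, and then exploit the new constraint this asymmetry places on unmarked vertices bordering the marked set. Writing $A_{vu}=\braket{v,u|\psi}$ and $\overline{A_u}=\frac{1}{d_u}\sum_{w\in N(u)} A_{uw}$, I would trace the amplitude through $C'$ then $S$ to obtain the stationarity condition at the directed edge $\ket{v\to u}$, namely $A_{vu}+A_{uv}=2\overline{A_u}$ if $u$ is unmarked and $A_{vu}+A_{uv}=0$ if $u$ is marked. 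Imposing this simultaneously from both endpoints of each edge yields three cases: (i)~both $v,u$ unmarked, forcing $\overline{A_v}=\overline{A_u}$ and $A_{vu}+A_{uv}=2\overline{A_v}$; (ii)~exactly one endpoint marked (say $u$ marked, $v$ unmarked), forcing $\overline{A_v}=0$ together with $A_{vu}+A_{uv}=0$; (iii)~both marked, giving $A_{vu}+A_{uv}=0$.

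The decisive new feature is case (ii): every unmarked vertex adjacent to a marked vertex must be a pure flip state. Case (i) then propagates this vanishing average along the unmarked subgraph, so every unmarked vertex sitting in a connected component of $V\setminus M$ that borders $M$ satisfies $\overline{A_v}=0$. In the standard setting of a connected graph with $M\neq\emptyset$, any connected component of $V\setminus M$ must have an edge into $M$ (otherwise no $G$-path could connect it to a marked vertex), so $\overline{A_v}=0$ for every unmarked $v$.

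To finish, I would express $\braket{\psi(0)|\psi}=\frac{1}{\sqrt{2|E|}}\sum_{\{v,u\}\in E}(A_{vu}+A_{uv})$. By (ii) and (iii) every edge incident to $M$ contributes $0$, and by (i) every fully-unmarked edge contributes $2\overline{A_v}=0$, giving $\braket{\psi(0)|\psi}=0$. The main obstacle is obtaining and propagating the boundary constraint $\overline{A_v}=0$ in case (ii): this is genuinely new compared with Theorem~\ref{thm:stat-states}, where the Grover oracle followed by $C$ still preserves uniform amplitudes at marked vertices and no such boundary constraint arises, and its propagation through $V\setminus M$ is exactly where the graph's connectivity enters the argument.
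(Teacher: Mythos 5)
Your proposal is correct and follows essentially the same route as the paper's proof: the stationarity conditions you derive per directed edge are the paper's P1--P3 written with averages $\overline{A_u}$ in place of the uniform/flip components, the decisive boundary constraint (an unmarked vertex adjacent to a marked one has zero uniform part) is exactly the paper's observation from P1, and the propagation via connectedness of the unmarked subgraph's components into $M$ matches the paper's argument. The conclusion that each undirected edge contributes $A_{vu}+A_{uv}=0$ to $\braket{\psi(0)|\psi}$ is the paper's antisymmetry statement $\braket{uv|\psi}=-\braket{vu|\psi}$ paired against the symmetric initial state.
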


\begin{proof}
	As in Theorem~\ref{thm:stat-states}, we derive conditions on the amplitudes between adjacent vertices such that a state is stationary. As depicted in Fig.~\ref{fig:paircases}, there are three possibilities for how the adjacent vertices are marked.

	\emph{P1.} Suppose vertices $a$ and $b$ are adjacent, with $a$ unmarked and $b$ marked. Let the amplitude on $\ket{ab}$ be $\sigma_1 + \phi_1$, where $\sigma_1$ comes from the uniform part of $a$ and $\phi_1$ comes from the flip part of $a$. Similarly, let the amplitude on $\ket{ba}$ be $\sigma_2 + \phi_2$. Then after we apply $U'$, the amplitude on $\ket{ab}$ becomes $-\sigma_2 - \phi_2$, and the amplitude on $\ket{ba}$ becomes $\sigma_1 - \phi_1$. Since $\ket{\psi}$ is stationary, we have the equalities:
	\begin{align*}
		\sigma_1 + \phi_1 &= -\sigma_2 - \phi_2 \\
		\sigma_2 + \phi_2 &= \sigma_1 - \phi_1.
	\end{align*}
	The solution to this system is $\sigma_1 = 0$ and $\phi_1 = -\sigma_2-\phi_2$. 
	
	\emph{P2.} Now suppose that both $a$ and $b$ are unmarked. Then
	\begin{align*}
		\sigma_1 + \phi_1 &= \sigma_1 - \phi_2 \\
		\sigma_2 + \phi_2 &= \sigma_2 - \phi_1.
	\end{align*}
	Here we have $\sigma_1 = \sigma_2$ and $\phi_1 = -\phi_2$.

	\emph{P3.} Lastly, suppose that both $a$ and $b$ are marked. 
	\begin{align*}
		\sigma_1 + \phi_1 &= -\sigma_2 - \phi_2 \\
		\sigma_2 + \phi_2 &= -\sigma_1 - \phi_1.
	\end{align*}
	These two equations are equivalent.

	Note that in P1 and P3, we have $\sigma_1 + \phi_1 = -(\sigma_2 + \phi_2)$. We will now show that this is also true for P2. By P1, the uniform component at an unmarked vertex neighboring a marked one is 0. By P2, the uniform components of two neighboring unmarked vertices are equal. Therefore, assuming the graph is connected, the uniform component of every unmarked vertex is 0. Then in P2, we have $\sigma_1=\sigma_2=0$, and therefore $\sigma_1+\phi_1=-(\sigma_2+\phi_2)$.

	Thus for each edge pair of adjacent vertices $u$ and $v$, $\braket{uv |\psi} =-\braket{vu |\psi}$. But for the initial uniform state, $\braket{uv |\psi(0)} = \braket{vu |\psi(0)}$. Therefore each pair of adjacent vertices contributes 0 to $\braket{\psi(0)|\psi}$:
	\[ \braket{\psi(0) | uv}\braket{uv|\psi} + \braket{\psi(0) |vu }\braket{vu|\psi}=0. \]
	Hence summing over all the edges, $\braket{\psi(0)|\psi} = 0$.
\end{proof}

This theorem explains why, on the 2D periodic square grid, stationary states were found with $U$ \cite{NR2016,NS2016} but not with $U'$ \cite{AKR2005,NR2015,NR2016}. Similarly, the simplex of complete graphs with a fully marked clique, as in Fig.~\ref{fig:simplex_full}, has a stationary state for $U$ but not for $U'$ \cite{WA2015}. Finally, this proves that any configuration for which the Grover search operator $U$ and the SKW operator $U'$ are equivalent will not be disturbed by a stationary state.


\section{Conclusion}

Despite additional marked vertices making search by classical random walk easier, they can make search by quantum walk harder by causing the initial uniform state to be approximately equal to a stationary state, or 1-eigenvector, of the walk. We completely characterized such stationary states in terms of the amplitudes between adjacent vertices in Theorem~\ref{thm:stat-states}, and we showed in Theorem~\ref{thm:opt-stat-states} that the stationary states considered by Nahimovs, Rivosh, and Santos have maximal overlap with the initial state. We then investigated the existence of stationary states, showing in Theorem~\ref{thm:bip} that a bipartite marked connected component forms a stationary state if and only if the sums of the shortages on each bipartite set are equal. In contrast, a non-bipartite marked connected component always forms a stationary state, regardless of the shortages on each marked vertex, as shown in Theorem~\ref{thm:non-bip}. These results utilized the natural Grover oracle. Using the SKW oracle instead, we showed in Theorem~\ref{thm:SKW} that all stationary states are orthogonal to the initial uniform state. Hence in some cases, the SKW oracle may be superior to the more natural Grover oracle.

Although the 1-eigenvectors considered in our analysis are the only true stationary states of the quantum walk search operator, other eigenvectors are stationary in their probability distributions. Stationary probability distributions can also arise without the initial uniform state being an eigenvector, such as for a marked diagonal on the 2D periodic square lattice, where the walk alternates between $2N^2$ states without changing the probability at any vertex \cite{AR2008}. Such states are subjects for further research.


\begin{acknowledgments}
	Thanks to Nikolajs Nahimovs, Alexander Rivosh, and Raqueline A.~M.~Santos for sharing their work on stationary states with us.
	This work was supported by the European Union Seventh Framework Programme (FP7/2007-2013) under the QALGO (Grant Agreement No.~600700) project and the RAQUEL (Grant Agreement No.~323970) project, the ERC Advanced Grant MQC, and the Latvian State Research Programme NeXIT project No.~1.
\end{acknowledgments}


\bibliography{refs}

\end{document}